\DeclareFixedFont{\ttb}{T1}{txtt}{bx}{n}{8}
\DeclareFixedFont{\ttm}{T1}{txtt}{m}{n}{8}
\DeclareFixedFont{\ttc}{T1}{txtt}{m}{n}{8}
\lstdefinestyle{pythonstyle}{
language=Python,
basicstyle=\ttm,
otherkeywords={self},
keywordstyle=\ttb\color{black},
frame=tb,
commentstyle=\color{gray}\ttc,
showstringspaces=false,
breaklines=true,
postbreak=\mbox{\textcolor{gray}{$\hookrightarrow$}\space}
}
\newcommand{\is}{\text{IS}}
\newcommand{\ess}{M_{\text{eff}}}
\newcommand{\mlow}{M_{\text{thresh}}}
\begin{document}

\title*{A practical example for the non-linear Bayesian filtering of model parameters}
% Use \titlerunning{Short Title} for an abbreviated version of
\titlerunning{Non-linear Bayesian filtering}
% your contribution title if the original one is too long
\author{Matthieu Bult\'e, Jonas Latz, Elisabeth Ullmann}
% Use \authorrunning{Short Title} for an abbreviated version of
% your contribution title if the original one is too long
\institute{Matthieu Bult\'e, Jonas Latz, Elisabeth Ullmann \at Zentrum Mathematik, Technische Universit\"at M\"unchen, Boltzmannstra{\ss}e 3, 85748 Garching b.M., Germany, \email{matthieu.bulte@tum.de, jonas.latz@ma.tum.de, elisabeth.ullmann@ma.tum.de}}
%
% Use the package "url.sty" to avoid
% problems with special characters
% used in your e-mail or web address
%
\maketitle

\abstract{
In this tutorial we consider the non-linear Bayesian filtering of static parameters in a time-dependent model.
We outline the theoretical background and discuss appropriate solvers.
We focus on particle-based filters and present Sequential Importance Sampling (SIS) and Sequential Monte Carlo (SMC).
Throughout the paper we illustrate the concepts and techniques with a practical example using real-world data.
The task is to estimate the gravitational acceleration of the Earth $g$ by using observations collected from a simple pendulum.
Importantly, the particle filters enable the adaptive updating of the estimate for $g$ as new observations become available.
For tutorial purposes we provide the data set and a Python implementation of the particle filters.
}

\section{Introduction}

An important building block of uncertainty quantification is the statistical estimation and sustained learning of parameters in mathematical and computational models.
In science and engineering models are used to emulate, predict, and optimise the behaviour of a system of interest.
Examples include the transport of contaminants by groundwater flow in hydrology, the price of a European option in finance, or the motion of planets by mutual gravitational forces in astrophysics. 
The associated mathematical models for these examples are an elliptic partial differential equation (PDE), the parabolic Black-Scholes PDE, and a system of ordinary differential equations (ODEs) describing the $N$-body dynamics.

Assuming that we have observational data of the system of interest, it is now necessary to calibrate the model with respect to these observations.
This means that we identify model parameters such that the model output is close to the observations in a suitable metric.
In the examples above we need to calibrate the hydraulic conductivity of the groundwater reservoir, the volatility of the stock associated with the option, and the masses of the planets.

In this tutorial  we focus on the next step following the model calibration, namely the updating of the estimated parameters as additional observations become available.
This is an important task since many systems are only partially observed. 
Thus it is often unlikely to obtain high quality estimates of underlying model parameters by using only a single data set.
Moreover, it is often very expensive or impossible  to restart the parameter estimation with all data sets after a new data set becomes available.
The problem of combining a parameter estimate with a new set of observations to update the estimate based on all observations is called \emph{filtering} in statistics.
Filtering can be considered as a \emph{learning process}: a certain state of knowledge based on previous observations is combined with new observations to reach an improved state of knowledge.

Throughout this tutorial we consider a practical example for filtering. 
We study the periodic motion of a pendulum.
The underlying mathematical model is an ODE.
The model parameters are the length of the pendulum string $\ell$, and the gravitational acceleration of the Earth $g$.
We assume that $\ell$ is known, however, we are uncertain about $g$. 
Our goal is to estimate and update the estimate for $g$ based on real-world observational data.
Importantly, the pendulum experiment can be carried out without expensive equipment or time-consuming preparations.
Moreover, the mathematical model is simple and does not require sophisticated or expensive numerical solvers.
However, the filtering problem is non-linear and non-Gaussian.
It does not have an analytic solution, and an efficient approximate solution must be constructed. 
We use particle filters for this task. 

The simple pendulum setting allows us to focus on the statistical aspects of the estimation problem and the construction of particle filters.
The filters we discuss are well known in the statistics and control theory communities, and textbooks and tutorials are available, see e.g. \cite{Doucet2011, Law2015, Sarkka2013}.
However, these works focus on filters for \emph{state space estimation}.
In contrast, we employ filters for \emph{parameter estimation} in mathematical models, and within the Bayesian framework.

Bayesian inverse problems attracted a lot of attention in the applied mathematics community during the past decade since the work by Stuart \cite{Stuart2010} which laid out the mathematical foundations of Bayesian inverse problems.
The design of efficient solvers for these problems is an active area of research, and particle filters offer attractive features which deserve further research.
This tutorial enables interested readers to learn the building blocks of particle filters illustrated by a simple example.
Moreover, we provide the source code so that the reader can combine the filters with more sophisticated mathematical models.

The remaining part of this tutorial is organised as follows. 
In \S \ref{Subs_Problem_Form} we give a precise formulation of the filtering problem and define a filter.
We introduce the pendulum problem in \S \ref{Sec_Pendulum}, and review previous work on model calibration, filtering and the numerical approximation of these procedures in \S \ref{Sub_SoA}.
In \S \ref{Sec_Bayes} we introduce the Bayesian solution to the filtering problem. 
Furthermore, we explain the statistical modeling of the pendulum filtering problem.
In \S \ref{Sec_SMC} we discuss particle-based filters, namely Sequential Importance Sampling and Sequential Monte Carlo.
In \S \ref{Subs_Num_Pendulum} we apply both these methods to the pendulum filtering problem, and comment on the estimation results.
Finally, we provide a discussion in \S \ref{sec:discussion}.

\section{The filtering problem}
\label{Subs_Problem_Form}
We motivated the filtering of model parameters in the preceding section. 
Now we give a rigorous introduction to filtering.
Note that we first define the filtering problem in a general setting. 

Let $X$ and $H$ be separable Banach spaces.
$X$ denotes the \emph{parameter space}, and $H$ denotes the \emph{model output space}.
We define a \emph{mathematical model} $G: X \rightarrow H$ as a mapping from the parameter space to the model output space. 
Next, we observe the system of interest that is represented by the model.
We collect measurements at $T \in \mathbb{N}$ points in time $t = 1,\dots,T$. 
These observations are denoted by $y_1, y_2,\dots,y_T$. 
Each observation $y_t$ is an element of a \emph{finite-dimensional} Banach space $Y_t$.
The family of spaces $Y_1,\dots,Y_T$ are the so called \emph{data spaces}.
We model the observations by \emph{observation operators} $\mathcal{O}_t: H \rightarrow Y_t$, $t \geq 1$, that map the model output to the associated observation. 
Furthermore, we define a family of \emph{forward response operators} $\mathcal{G}_t := \mathcal{O}_t \circ G$, $t \geq 1$, that map from the parameter space directly to the associated data space.
We assume that the observations are noisy and model this fact by randomness.
The randomness is represented on an underlying probability space $(\Omega, \mathcal{A}, \mathbb{P})$.
Each observation $y_t$ is the realisation of a random variable $\widetilde{y}_t: \Omega \rightarrow Y_t$.
Moreover,
\begin{equation}
\widetilde{y}_t \sim L_t(\cdot|\theta^\dagger)
\end{equation}
where $L_t: Y_t \times X \rightarrow [0, \infty)$, $t \geq 1$, is a parameterised probability density function (w.r.t. the Lebesgue measure). 
$\theta^\dagger$ denotes the true parameter associated with the observations.

\begin{example}[Additive Gaussian noise] \label{Example_Gaussian_noise_add}
A typical assumption is that the measurement noise is Gaussian and additive. 
In that case $y_t$ is a realisation of the random variable
$\widetilde{y}_t = \mathcal{G}_t(\theta^\dagger) + \eta_t$,
where $\eta_t \sim \mathrm{N}(0,\Gamma_t)$ and $\Gamma_t: Y_t \rightarrow Y_t$ is a linear, symmetric, positive definite covariance operator, $t \geq 1$.
It holds
$$L_t(y_t|\theta) 
= \exp\left(-\frac{1}{2}\|\Gamma^{-1/2}(\mathcal{G}_t(\theta)-y_t)\|^2 \right).$$
\end{example}

The \emph{inverse} or \emph{smoothing problem} at a specific timepoint $t \geq 1$ is the task to  identify the unknown true parameter $\theta^\dagger$ given the data set $(y_1,\dots,y_t) =: y_{1:t}$.
We denote the estimate for the parameter by $\widehat{\theta}(y_{1:t})$. 
Hence, a formal expression for smoothing is the map $$y_{1:t} \mapsto \widehat{\theta}(y_{1:t}).$$
The \emph{filtering problem}, on the other hand, is the task to update the estimate $\widehat{\theta}(y_{1:t})$ after the observation $y_{t+1}$ is available. 
Hence, a formal expression for a \emph{filter} is the map $$\{\widehat{\theta}(y_{1:t}), y_{t+1}\}  \mapsto \widehat{\theta}(y_{1:t+1}).$$
Filtering can be considered as a \emph{learning process} in the following sense.
Our point of departure is a current state of knowledge represented by the parameter estimate $\widehat{\theta}(y_{1:t})$.
This involves all observations up to the point in time $t$.
The data set $y_{t+1}$ is then used to arrive at a new state of knowledge represented by the updated parameter estimate $\widehat{\theta}(y_{1:t+1})$.
We depict this learning process in Fig. \ref{Fig_tikz_learning}. 
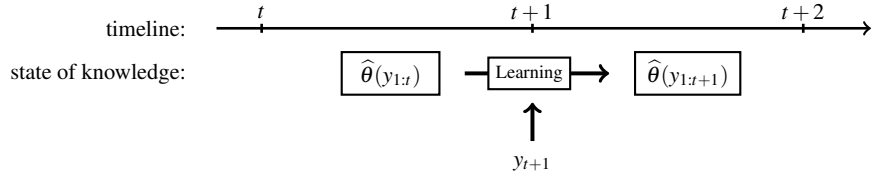
\begin{figure}[htb]
\centering
\begin{tikzpicture}[scale=0.6]
\draw	(1.5,0)node[anchor=east] {timeline:};

 \draw	(1.5,-1)node[anchor=east] {state of knowledge:};
 %\draw[line width = 1.5, color = black,->](0,-1) -- (1.5,-1);
 
 \draw	(4.75,-1)node[rectangle,anchor=west,text width=1.1cm, thick,draw=black,align=center,fill=white] {$\widehat{\theta}(y_{1:t})$};
 
  \draw	(3,0.35)node[align=center] {$t$};

    \draw	(15,0.35)node[align=center] {$t+2$};
  \draw[line width = 1.0, color = black,-](3,0.1) -- (3,-0.1);
    \draw[line width = 1.0, color = black,-](9,0.1) -- (9,-0.1);
    \draw[line width = 1.0, color = black,-](15,0.1) -- (15,-0.1);

  \draw[line width = 1.5, color = black,->](7.5,-1) -- (10.7,-1);
    \draw[line width = 1.0, color = black,->](2,0) -- (16.5,0);

    \draw[line width = 1.5, color = black,<-](9,-1.65) -- (9,-2.5);
      \draw	(9,-3)node[align=center] {$y_{t+1}$};
\draw	(8,-1)node[align=center,rectangle,anchor=west,thick,draw=black,fill=white] {\scriptsize Learning};
   \draw	(11.25,-1)node[align=center,rectangle,anchor=west,text width=1.2cm,thick,draw=black,fill=white] {$\widehat{\theta}(y_{1:t+1})$};
      \draw	(9,0.35)node[align=center] {$t+1$};
 \end{tikzpicture}
 \caption{The filtering problem. The starting point is the current state of knowledge $\widehat{\theta}(y_{1:t})$ at the point in time $t$. At the timepoint $t+1$ we observe $y_{t+1}$. We want to use these observations to improve our knowledge concerning $\theta^\dagger$. The new state of knowledge is given by the updated estimate $\widehat{\theta}(y_{1:t+1})$. }
\label{Fig_tikz_learning}
\end{figure}

Next we describe two practical filtering problems.
The pendulum filtering problem is used for illustration purposes, and the tumor filtering problem highlights a more involved application of filtering.

\begin{example}[Tumor]
The tumor inverse and filtering problem has been discussed extensively in the literature, see e.g. \cite{Collis2017, Kahle2018, Lima2017} and the references therein.
In this problem we model a tumor with a system of (partial) differential equations, for example, the Cahn-Hilliard or reaction-diffusion equations, or, alternatively, an atomistic model.
The goal is to predict the future growth of the tumor.
Moreover, we wish to test, compare and select suitable therapeutical treatments.
To do this we need to estimate model parameters, e.g. the tumor proliferation and consumption rate, and chemotaxis parameters.
These model parameters are patient-specific and can be calibrated and updated using patient data.
The data is given by tumor images obtained e.g. with \emph{magnetic resonance imaging} (MRI) or with \emph{positron emission tomography} (PET).
The images are captured at different timepoints and monitor the progression of the tumor growth.
Note that the data spaces are in general infinite-dimensional in this setting.
\end{example}

\subsection{Pendulum example} \label{Sec_Pendulum}

In this section we describe a simple yet practical filtering problem that is associated with a real-world experiment and data.
Throughout this tutorial we will come back to this problem to illustrate the filtering of model parameters.

The goal of the pendulum inverse problem is the estimation of the Earth's gravitational acceleration $g$ % \approx 9.806m/s^2$ 
using measurements collected from the periodic motion of a pendulum. 
Note that the gravitational acceleration at a particular location depends on the altitude and the latitude of this location. 
We use measurements that were collected in Garching near Munich, Germany, 
where the height above mean sea level is $h = 482m$ and the latitude is $\phi = 48^\circ 15' \ \mathrm{N} = 48.25^\circ \  \mathrm{N}$.
The formula (4.2) in \cite{Allmaras2013} gives the gravitational acceleration in Garching:
\begin{align*}
g^\dagger = &\Big(9.780327 \left(1 + 5.3024\cdot 10^{-3} \sin^2(\phi) - 5.8\cdot 10^{-6}\sin^2(2\phi)\right)\\&-1.965 \cdot 10^{-6}h m^{-1} \Big)\frac{m}{s^2} \approx 9.808 \frac{m}{s^2}.
\end{align*}

We use a simplified model to describe the dynamics of the pendulum.
Specifically, we ignore friction, and assume that the pendulum movements take place in a two-dimensional, vertically oriented plane. 
In this case the state of the pendulum can be described by a single scalar that is equal to the angle enclosed by the pendulum string in its excited position and the stable equilibrium position.
By using Newton's second law of motion and by considering the forces acting on the pendulum it is easy to see that this angle $x(\cdot; g)$ satisfies the parametrised non-linear initial value problem (IVP)
\begin{align*}
  \ddot{x}(\tau; g) &= -\frac{g}{\ell}\sin(x(\tau; g)),\\
  \dot{x}(0; g) &= v_0,\\
  x(0; g) &= x_0,
\end{align*}
where $\ell$ denotes the length of the string that connects the two ends of the pendulum, and $\tau \in [0, \infty)$ denotes time. 
An illustration of the model is given in Fig. \ref{Fig_tikz_Pendulum}.

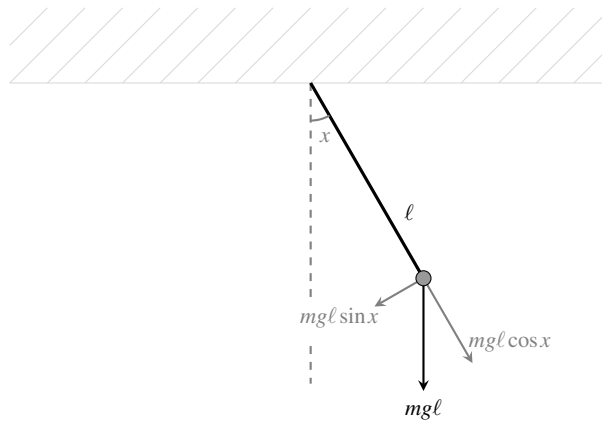
\begin{figure}[h]
  \centering
  \begin{tikzpicture}
    % save length of g-vector and theta to macros
    \pgfmathsetmacro{\Gvec}{1.5}
    \pgfmathsetmacro{\myAngle}{30}
    % calculate lengths of vector components
    \pgfmathsetmacro{\Gcos}{\Gvec*cos(\myAngle)}
    \pgfmathsetmacro{\Gsin}{\Gvec*sin(\myAngle)}

    \coordinate (centro) at (0,0);
    \draw[gray!30] (-4,0) -- (4,0);
    \draw[gray!30] 
    (-4,0) -- (-3,1)
    (-3,0) -- (-2,1)
    (-2,0) -- (-1,1)
    (-1,0) -- (0,1)
    (0,0) -- (1,1)
    (1,0) -- (2,1)
    (2,0) -- (3,1)
    (3,0) -- (4,1)
    
    (-4,0.5) -- (-3.5,1)
    (-3.5,0) -- (-2.5,1)
    (-2.5,0) -- (-1.5,1)
    (-1.5,0) -- (-0.5,1)
    (-0.5,0) -- (0.5,1)
    (0.5,0) -- (1.5,1)
    (1.5,0) -- (2.5,1)
    (2.5,0) -- (3.5,1)
    (3.5,0) -- (4,0.5)
    ;
    \draw[dashed,gray,-,thick] (centro) -- ++ (0,-2.9) node (mary) [black,below]{$ $};
    \draw[dashed,gray,-,thick] (0,-3.5) -- ++ (0,-0.5);
    \draw[very thick] (centro) -- ++(270+\myAngle:3) coordinate (bob)
      node[near end,above right] {$\ell$};
 \pic [gray,draw, -, "$x$", angle eccentricity=1.5,thick] {angle = mary--centro--bob};
    \draw [gray,-stealth,thick] (bob) -- ($(bob)!-\Gcos cm!(centro)$)
      coordinate (gcos)
      node[near end, right] {$mg\ell\cos x$};
    \draw [gray,-stealth,thick] (bob) -- ($(bob)!\Gsin cm!90:(centro)$)
      coordinate (gsin)
      node[near end,below left] {$mg\ell\sin x$};
    \draw [-stealth,thick] (bob) -- ++(0,-\Gvec)
      coordinate (g)
      node[below] {$mg \ell$};
    \filldraw [fill=black!40,draw=black] (bob) circle[radius=0.1];
\end{tikzpicture}

%%% Local Variables:
%%% mode: latex
%%% TeX-master: "../main"
%%% End:
  \caption{Pendulum model and forces applied to the bob with mass $m$. 
  	The black vertical vector represents the gravitational force. It is decomposed into the gray vectors representing the components parallel and perpendicular to the motion of the pendulum. 
  	The dashed line represents the angle $x=0$ where the pendulum is at rest. In this position the time measurements are taken. 
  	{This figure is adapted from \cite{Bulte2018}.}}
  \label{Fig_tikz_Pendulum}
\end{figure}

Following the framework presented in \S \ref{Subs_Problem_Form}, we define the model $G(g) = x(\cdot; g)$ which maps the model parameter, here the gravitational acceleration, to the model output, here the time-dependent angle.
For $t  = 1, \dots, 10$ we define the observation operator $\mathcal{O}_t$ by $\mathcal{O}_t(x(\cdot; g)) = x(\tau_t;g)$.
This models the angle measurement of the pendulum at a fixed point in time $\tau_t$.
Note that in practise the measurement is reversed since we measure the time at a prescribed angle that is easy to identify. 
Mathematically, this can be interpreted as angle measurement at a specific timepoint.
Finally, we define the forward response operators $\mathcal{G}_t = \mathcal{O}_t \circ G$ for $t = 1, \dots, 10$.
The data set $y_1, \ldots, y_t$ of angle measurements corresponds to realisations of the random variables
\begin{equation*}
  \widetilde{y}_t := \mathcal{G}_t(g^\dagger) + \eta_t 
\end{equation*}
where $\eta_1, \ldots, \eta_t$ are independent and identically distributed Gaussian random variables according to $N(0, \sigma^2)$.
The pendulum filtering problem consists of using time measurements (and the associated angle measurements) to sequentially improve the estimate of the true value of the Earth's gravitational acceleration $g^\dagger$.

\begin{remark} \label{Remark_linearisedODE}
	It is possible to simplify the mathematical model of the pendulum motion.
	Assume that $v_0 = 0$. If $|x|$ is small, then $x \approx \sin(x)$. Hence, the nonlinear ODE $\ddot{x}=-(g/\ell) \sin(x)$ above can be replaced by the linear ODE $\ddot{x}=-(g/\ell) x$ with the analytical solution
	\[x(\tau; g) := x_0\cos(\tau\sqrt{g/\ell}).\]
	However, the relation between the angle $x$ and the model parameter $g$ is still nonlinear and thus the filtering problem remains nonlinear with no analytical solution.
	For this reason we do not consider the linear pendulum dynamics.
\end{remark}

\subsection{State of the art}\label{Sub_SoA}

Model calibration problems have often been approached with optimisation techniques, for example, the Gauss-Newton or Levenberg-Marquardt algorithm. 
These algorithms minimise a (possibly regularised) quadratic loss function which measures the distance between the data and the model output, see e.g. \cite[\S 10]{Noce06}.
Today's availability of high-performance computing resources has enabled statistical techniques for the calibration of computationally expensive models.
A popular example is Bayesian inference.
Here we consider the  model parameters to be uncertain and model the associated uncertainty with a probability measure. 
By using Bayes' formula it is possible to include information from the observational data in this probability measure. 
In particular, the probability measure is conditioned with respect to the data (see \cite{Robert2007}).

Recently, Bayesian inference for model parameters (so-called Bayesian inverse problems) has attracted a lot of attention in the literature. 
It was first proposed in \cite{Kaipio2005} and thoroughly analysed in \cite{Dashti2017,Stuart2010}.
A tutorial on Bayesian inverse problems is given by Allmaras et al. \cite{Allmaras2013}; in fact this work inspired the authors of this article.
However, most works on Bayesian inverse problems, including the works cited above, are concerned with the so-called \emph{static} Bayesian learning where one uses a single set of observations and no filtering is carried out.
Taking the step from including a single set of observations to iteratively including more observations is both practically important and non-trivial.
We mention that the filtering of model parameters is closely related to the filtering of states in state space models.
In the pendulum problem in \S \ref{Sec_Pendulum} this task corresponds to the tracking and prediction of the pendulum motion.
Filtering of states is a central problem in \emph{data assimilation} (see e.g. \cite{Doucet2011,Law2015,Nakamura2015}).

Linear-Gaussian filtering problems can be solved analytically with the Kalman filter, see e.g. \cite{Humpherys2012}.
For non-linear filtering problems several non-linear approximations to the Kalman filter have been proposed. Examples are the extended Kalman filter (EKF, see e.g. \cite{Humpherys2012}) and the Ensemble Kalman Filter (EnKF, see \cite{Evensen2003}) which are essentially based on linearisations of the forward problem.
The EKF uses a Taylor expansion for linearisation.
The EnKF uses a probabilistic linearisation technique called \emph{Bayes linear} (see e.g. \cite{Latz2016,Schillings2017}).
The Taylor expansion in the EKF can be evaluated analytically, however, for the Bayes linear approximation this is not possible.
For this reason the EnKF uses a particle approximation -- an \emph{ensemble} of unweighted particles.

Another family of approximations, so-called \emph{particle filters}, do not use  linearisation strategies, but rely on importance sampling (see \cite{Robert2004}).
We will discuss two particle filters -- Sequential Importance Sampling and Sequential Monte Carlo -- in this tutorial.
Note that these methods are not only used for filtering.
Indeed, it is possible to sample from general sequences of probability measures by using particle filters.
We refer to \cite{DelMoral2004,DelMoral2013,DelMoral2006} for the theoretical background and further applications of particle filters.
We mention the use of particle filters in static Bayesian inverse problems. 
SMC and SIS are here used together with a tempering of the likelihood (see e.g. \cite{Beskos2016, Beskos2015, Kahle2018, Kantas2014, Neal2001}), with multiple resolutions of the computational model (see e.g. \cite{Beskos2017a}) or with a combination of both (see e.g. \cite{Latz2018}).
An excellent overview of SMC and particle filters can be found on the webpage of Doucet \cite{Doucet_Webpage}.
Finally, note that SMC requires a Markov kernel that is typically given by a Markov Chain Monte Carlo (MCMC) method \cite{Liu2004, Robert2004}.

\section{Bayesian filtering} \label{Sec_Bayes}
In this section we explain the Bayesian approach to the filtering problem discussed in \S\ref{Subs_Problem_Form}.
We begin by introducing conditional probabilities and their construction using Bayes' formula.
We conclude this section by revisiting the pendulum filtering problem presented in \S\ref{Sec_Pendulum} and discuss implications from the Bayesian approach.

Consider some uncertain parameter $\theta$.
We model the uncertainty as a random variable; by definition this is a measurable map from the probability space $(\Omega, \mathcal{A}, \mathbb{P})$ to the parameter space $X$.
{The expected value $\mathbb{E}(\cdot)$ is the operator integrating a function with respect to $\mathbb{P}$.}
The probability measure of $\theta$ is called $\mu_0 = \mathbb{P}(\theta \in \cdot)$.
The measure $\mu_0$ reflects the knowledge concerning $\theta$ before we include any information given by observations. 
For this reason $\mu_0$ is called \emph{prior (measure)}.

In \S \ref{Subs_Problem_Form} we modeled the data generation at time $t$ as an event that occurs and that we observe. 
This event is $\{\widetilde{y}_t = y_t\} \in \mathcal{A}$.
The process of \emph{Bayesian learning} consists in including the information ``$\widetilde{y}_t = y_t$'' into the probability distribution of $\theta$.
This is done with conditional probability measures. 
A good intuition about conditional probabilities can be obtained by consideration of discrete probabilities.

\begin{example}[Conditional probability] \label{Example_CondProb_Disc1}
Let $\theta$ denote a uniformly distributed random variable on the parameter space $X := \{1,\dots,10\}$.
The uniform distribution models the fact that we have no information whatsoever about $\theta$. 
Next, an oracle tells us that ``$\theta$ is about $4$''.
We model this information by assuming that $\theta$ is equal to $3$, $4$ or $5$ with equal probability. 
Our state of knowledge is then modeled by the following conditional probabilities:
\begin{align*}
\mathbb{P}(\theta = k | \theta \text{ is about } 4) := \begin{cases}
1/3, &\text{if } k = 3,4,5, \\
0, &\text{otherwise.}
\end{cases}
\end{align*}
\end{example}
We revisit this example in the next subsection and illustrate the computation of the conditional probability measure.
For now, we move back to the filtering problem. 
Having observed the first data set $y_1$ we replace the prior probability measure $\mu_0$ by the conditional probability measure 
\[
\mu_1 
= 
\mathbb{P}(\theta \in \cdot| \widetilde{y}_1 = y_1).
\]
Analogously to Example \ref{Example_CondProb_Disc1} the measure $\mu_1$ now reflects the knowledge about $\theta$ given the information that the event $\{\widetilde{y}_1 = y_1\}$ occurred.
In the next step we observe $\widetilde{y}_2 = y_2$ and update $\mu_1 \mapsto \mu_2$, where
\[
\mu_2 
= 
\mathbb{P}(\theta \in \cdot| \widetilde{y}_1 = y_1,\, \widetilde{y}_2 = y_2) 
=: 
\mathbb{P}(\theta \in \cdot| \widetilde{y}_{1:2} = y_{1:2}).
\]
This update models the Bayesian filtering from time point $t=1$ to $t=2$.
More generally, we can define a Bayesian filter as a map
 \[
 \{\mu_t, y_{t+1}\} 
 :=  
 \{\mathbb{P}(\theta \in \cdot| \widetilde{y}_{1:t} = y_{1:t}), y_{t+1}\}\mapsto \mathbb{P}(\theta \in \cdot| \widetilde{y}_{1:t+1} = y_{1:t+1})
  =:
  \mu_{t+1}.
 \]
Since $\mu_t$ reflects the knowledge about $\theta$ after seeing the data, this measure is called \emph{posterior (measure) at time $t$} for every $t \geq 1$.
The notions and explanations in \S \ref{Subs_Problem_Form} can be transferred to Bayesian filtering by replacing $\widehat{\theta}(y_{1:t})$ with $\mu_t$.

\subsection{Bayes' formula} \label{Subs_Bayes_Form}
The Bayesian learning procedure -- formalised by the map $\{\mu_t, y_{t+1}\} \mapsto \mu_{t+1}$ -- is fundamentally based on Bayes' formula.
In order to define Bayes' formula, we  make some simplifying, yet not necessarily restrictive assumptions.

In \S \ref{Subs_Problem_Form} the parameter space $X$ can be infinite-dimensional. 
In the remainder of this tutorial we consider $X:= \mathbb{R}^N$, a finite-dimensional parameter space.
For a treatment of the infinite-dimensional case we refer to \cite{Stuart2010}. 
Moreover, we assume that $\mu_0$ has a \emph{probability  density function (pdf)} $\pi_0: X \rightarrow \mathbb{R}$ with respect to the Lebesgue measure. 
This allows us to represent $\mu_0$ by $$\mu_0(A) := \int_A \pi_0(\theta) \mathrm{d}\theta,$$
for any measurable set $A \subseteq X$.
Furthermore, we assume that the \emph{model evidence}
\begin{align*}
{Z_{t+1}(y_{t+1})} 
&:= 
\mu_t\left(L_{t+1}(y_{t+1}|\cdot)\right)  
:= 
\int_X L_{t+1}(y_{t+1}|\theta)  \mathrm{d}\mu_t(\theta)\\ 
&:= 
\int_X L_{t+1}(y_{t+1}|\theta) \pi_t(\theta)  \mathrm{d}\theta, \quad t \geq 0,
\end{align*}
is strictly positive and finite. 
Then it follows that the conditional measures $\mu_1, \mu_2,\dots$ have pdfs $\pi_1, \pi_2,\dots$ on $X$ as well. 
The associated densities are given recursively by \emph{Bayes' formula}
\begin{equation}\label{Eq_Bayes_Formula}
\pi_{t+1}(\theta) 
= 
\frac{1}{Z_{t+1}(y_{t+1})} L_{t+1}(y_{t+1}|\theta) \pi_t(\theta),  \quad t \geq 0,  \quad \text{a.e. }\theta \in X.
\end{equation}
In some situations it is possible to use this formula to compute the densities $(\pi_t)_{t=1}^\infty$ analytically. In particular, this is possible if  $\pi_t$ is the pdf of a \emph{conjugate prior} for the likelihood $L_{t+1}$ for $t \geq 0$.
However, in the filtering of parameters of nonlinear models it is typically impossible to find conjugate priors. 
In this case we construct approximations to the densities $(\pi_t)_{t=1}^\infty$ and the measures $(\mu_t)_{t=1}^\infty$, respectively.
We discuss particle based approximations in \S \ref{Sec_SMC}.

Before moving on to the Bayesian formulation of the pendulum filtering problem, we briefly revisit Example \ref{Example_CondProb_Disc1} and explain Bayes' formula in this setting.
Note that Bayes' formula holds more generally for probability density functions that are given w.r.t. to $\sigma$-finite measures, for example, \emph{counting densities} on $\mathbb{Z}$; these are sometimes called \emph{probability mass functions (pmf)}.

\addtocounter{example}{-1}
\begin{example}[continued]
Recall that we consider a uniformly distributed random variable $\theta$ taking values in $\{1,\dots,10\}$. 
Hence the counting density is given by $\pi_0 \equiv 1/10$.
Furthermore, an oracle tells us that ``$\theta$ is about 4'' and we model this information by $\theta$ taking on values in $\{3,4,5\}$ with equal probability.
Hence
$$
L_1(y_1|k) 
= 
\mathbb{P}(\theta \text{ is about } 4 | \theta = k ) = \begin{cases}
1, &\text{ if } k = 3,4,5,\\
0, &\text{ otherwise.}
\end{cases}
$$
Now we compute the posterior counting density using Bayes' formula.
We arrive at
\begin{align*}
\pi_1(k) &= \frac{1}{Z_1(y_1)} \cdot \pi_0(k)  \cdot L_1(y_1|k) \\&=  \begin{cases}
\frac{1}{1\cdot\frac{1}{10}+1\cdot\frac{1}{10}+1\cdot\frac{1}{10}} \cdot \frac{1}{10} \cdot 1, &\text{ if } k = 3,4,5,\\
0, &\text{ otherwise.}
\end{cases}
\\
&=  \begin{cases}
1/3, &\text{ if } k = 3,4,5,\\
0, &\text{ otherwise.}
\end{cases}
\end{align*}
Since the values of counting densities are identical to the probability of the respective singleton, we obtain $\pi_1(k):= \mathbb{P}(\theta = k |\theta \text{ is about } 4)$.
This fits with the intuition discussed in Example \ref{Example_CondProb_Disc1}.
\end{example}

\subsection{Bayesian filtering formulation of the pendulum problem} \label{Subsec_Bayesian_Pendulum_formulation}
Next, we revisit the pendulum filtering problem in \S \ref{Sec_Pendulum}, and reformulate it as a Bayesian filtering problem. 
This requires us to define a prior measure for the parameter $g$.
The prior should include all information about the parameter before any observations are made.
The first piece of information about $g$ stems from the physical model which describes the motion of  the pendulum. 
Indeed, since we know that gravity attracts objects towards the center of the Earth, we conclude that the value of $g$ must be non-negative in the coordinate system we use. 
Furthermore, we assume that previous experiments and theoretical considerations tell us that $g \le 20m/s^2$, and that $g$ is probably close to the center of the interval $[0, 20]$. 
We model this information by a normal distribution with unit variance, centered at $10$, and truncated support on the interval $[0, 20]$.
Thus, the prior density is proportional to 
\[
 \pi_0(g) 
 \propto 
 \begin{cases} \exp\left(-\frac{1}{2}(g-10)^2\right), &0 \leq g \leq 20, \\
0, &\text{ otherwise.}
\end{cases}
\]
Furthermore, the data generating distribution in \S\ref{Sec_Pendulum} implies that the likelihoods are given by
\begin{equation}\label{Eq_Pendulum_Likelihoods}
  L_t(y_t|g)
  \propto \exp\left(-\frac{1}{2\sigma^2}\left(y_t - \mathcal{G}_t(g)\right)^2\right), \quad t=1,\dots,10,
  \end{equation}
  where $\sigma^2$ denotes the noise variance. 
In the numerical experiments we use the estimate $\sigma^2 = 0.0025$.
This is obtained by combining typical values for human reaction time and a forward Monte Carlo simulation as follows.
Studies (see e.g. \cite{thorpe1996speed}) suggest that a typical visual reaction time for humans is $450 \text{ms} \pm 100\text{ms}$. 
We now would like to use this information about the error in the time measurements for modelling the error in the angles. 
To this end we use the forward model parameterised by the mean of the prior distribution to compute a reference solution for a set of generated time measurements. 
We then compute a Monte Carlo estimate of the angle error resulting from adding a $N(0.45, 0.01)$ noise to the set of time measurements. 
The result of this numerical experiment indicates that $N(0, \sigma^2)$ with $\sigma^2 = 0.0025$ is a suitable model for the angle measurements error.
Ideally, one would estimate the noise variance directly in the pendulum experiment.

By inserting the prior density and the definition of the likelihoods into Bayes' formula \eqref{Eq_Bayes_Formula} we obtain a recursive expression for the densities $\pi_1, \ldots, \pi_{10}$ of the posterior measures $\mu_1, \ldots, \mu_{10}$ as follows:
\begin{align*}
\pi_{t+1}(g) 
&:= 
\frac{1}{Z_{t+1}(y_{t+1})} \exp\left(-\frac{1}{2\sigma^2}\left(y_{t+1} - \mathcal{G}_{t+1}(g)\right)^2\right) \pi_t(g), \ \ \text{for a.e. }g \in [0,20],\\
Z_{t+1}(y_{t+1}) &:= \mu_{t}\left(\exp\left(-\frac{1}{2\sigma^2}\left(y_{t+1} - \mathcal{G}_{t+1}(g)\right)^2\right)\right),   \ \ t \geq 0.
\end{align*}
 The family of posterior densities corresponding to the measurements in Table~\ref{Table_measurements} is depicted in Fig.~\ref{Fig_Im_Densities}.  
 As the time increases we see that the posterior densities become more concentrated.
 Note that a more concentrated density indicates less uncertainty compared to a flat density.
Thus, the picture is consistent with our intuition: The uncertainty in the parameter value for $g$ is reduced as more data becomes available.
We provide further comments on the estimation results in \S\ref{Subsubsec_SMC}.
\begin{figure}[t]
  \centering
  \includegraphics[scale=0.5]{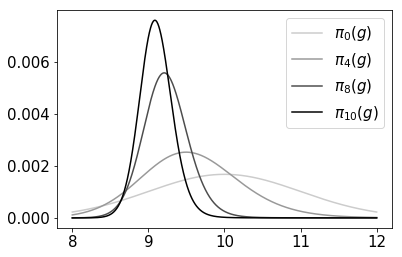}
  \caption{Sequence of posterior densities for a given value of $g$ for $t = 0\ (\text{prior}), 4, 8, 10$ data points. 
  The densities are computed with kernel density estimation based on a Sequential Monte Carlo particle approximation.
 }
  \label{Fig_Im_Densities}
\end{figure}

\section{Particle filters} \label{Sec_SMC}

It is typically impossible to compute the posterior measures $(\mu_t)_{t=1}^\infty$ analytically.
In this section we discuss two particle filters to approximate the solution of the Bayesian filtering problem. 
In \S \ref{Subs_SIS} we present the \emph{Sequential Importance Sampling} (SIS) algorithm. 
Unfortunately, SIS suffers from efficiency issues when used for filtering problems.
We explain this deficiency in a simple example.
It is possible to resolve the issues by extending SIS to the so-called \emph{Sequential Monte Carlo} (SMC) algorithm which we discuss in \S \ref{Subs_SMC}.

\begin{remark}
The sampling procedures presented in this section approximate posterior measures $(\mu_t)_{t= 1}^\infty$.
In addition, it is also possible to approximate integrals of the form $(\mu_t(f))_{t= 1}^\infty$, where $f: X \rightarrow \mathbb{R}$ is a $\mu_t$-integrable \emph{quantity of interest}.
To simplify the presentation we focus on the approximation of integrals rather than on the approximation of measures for the remaining part of this tutorial.
This can be done without loss of generality within the framework of weak representations of measures.
We outline this approach and give some examples for quantities of interest in Appendix A.
\end{remark}

\subsection{Sequential Importance Sampling}\label{Subs_SIS}

We are now interested in constructing an algorithm to approximate the sequence of posterior measures $(\mu_t)_{t=1}^\infty$ and thereby efficiently discretise the update rule of the learning process in Fig. \ref{Fig_tikz_learning}. 
To this end we consider \emph{particle-based} approximations.
These consist of a collection of $M$ weighted \emph{particles} (or \emph{samples}) $\{X_i, W_i\}_{i=1}^M$ where $X_i \in X$, $W_i \ge 0$ for $i = 1, \ldots, M$, and $\sum_{i=1}^MW_i = 1$. 
Additionally, the particles should be constructed in such a way that the \emph{random measure} $$\mu^M_t := \sum_{i=1}^MW_i\delta_{X_i}$$ converges \emph{weakly with probability one} to $\mu_t$ as $M \rightarrow \infty$. 
In particular, for any  bounded and continuous function $f : X \rightarrow \mathbb{R}$ (or bounded and Lipschitz-continuous function; see Prop. \ref{Prop_Convergence} in Appendix A),
\begin{align*}
  \mu_t^M(f) := \sum_{i=1}^MW_if(X_i) \xrightarrow{M \rightarrow\ \infty}  \mathbb{E}[f(\widehat\theta_t)] && \text{almost surely.}
\end{align*}
Note that $(\mu_t^M)_{M \geq 1}$ is a sequence of measure-valued random variables; that is a sequence of random variables defined on the space of measures. 
Almost every realisation of this sequence of measures converges weakly to a common deterministic measure as $M \rightarrow \infty$. 
More precisely, it holds that $$\mathbb{P}(\mu_t^M\rightarrow \mu_t, \text{ weakly, as }M \rightarrow \infty) = 1.$$

For simple distributions $\mu$ where direct sampling is possible an estimator with the desired properties above can be constructed by choosing independent random variables $X_1, \ldots, X_M$ distributed according to $\mu$ and by using uniform weights $W_i = 1/M$ for $i = 1, \ldots, M$. 
We denote this construction by the operator $S^M$ and call $\mu^M = S^M\mu$ the \emph{standard Monte Carlo} estimate of $\mu$. 
Since $S^M$ creates an empirical measure by using random variables the operator $S^M$ maps probability measures to random probability measures and is thus a non-deterministic operator.

However, it is typically impossible to sample from the posterior measures $(\mu_t)_{t=1}^\infty$ in the Bayesian filtering problem.
This precludes the construction of Monte Carlo estimates. 
Alternatively, we can use the \emph{Importance Sampling} (IS) method.
Let $\mu$, the \emph{target measure}, denote the probability measure to be estimated.
Let $\nu$ denote a probability measure from which it is possible to sample.
The measure $\nu$ is called \emph{importance measure}. 
Let $\mu$ be \emph{absolutely continuous} with respect to $\nu$.
This means that $\nu(A) = 0$ implies $\mu(A) = 0$ for any measurable set $A \subseteq X$.
Let furthermore $f$ denote a measurable, bounded function.
Then there exists a non-negative function $w : X \rightarrow \mathbb{R}$ such that
\begin{equation}\label{Eq_Radon_Nikodym}
  \mu(f) = \frac{\nu(f \cdot w)}{\nu(w)}.
\end{equation}
The intuition behind importance sampling is the following.
If the importance measure $\nu$ is close to the target measure $\mu$, then sampling from $\nu$ should be approximately equivalent to sampling from $\mu$. 

The IS estimate of $\mu$ is constructed by creating a Monte Carlo estimate of $\nu$.
Then, the Monte Carlo samples are reweighed using the expression in \eqref{Eq_Radon_Nikodym}.
This is necessary since we wish to obtain samples distributed according to the target measure $\mu$, and not samples distributed according to the importance measure $\nu$.
We arrive at
\begin{equation*}
  \mu^M(f)
  = 
  \frac{\nu^M(f \cdot w)}{\nu^M(w)} 
  = 
  \frac{\frac1M\sum_{i=1}^Mw(X_i)f(X_i)}{\frac1M\sum_{j=1}^Mw(X_j)} 
  :=
  \sum_{i=1}^MW_if(X_i).
\end{equation*}
In summary, importance sampling maps the Monte Carlo estimate $\nu^M$ to an updated estimate $\mu^M$ by adjusting the weights of the particles $X_1, \ldots, X_M$ according to the formula
\begin{equation*}
  W_i = \frac{w(X_i)}{\sum_{j=1}^Mw(X_j)}.
\end{equation*}

Now we apply the IS approximation in the context of the Bayesian filtering problem.
The prior measure $\mu_0$ is often tractable and direct sampling algorithms are available. 
It is then possible to create an initial Monte Carlo estimate of the prior measure.
Afterwards we can iteratively update the particles to incorporate the new knowledge from the observation $y_1, y_2, \ldots, y_T$ through importance sampling.
This follows the learning process described in \S \ref{Subs_Problem_Form}. 
Observe that Bayes' formula in \eqref{Eq_Bayes_Formula} tells us that each measure $\mu_{t+1}$ is absolutely continuous with respect to the previous measure.
We use this relation to define the nonlinear operator $\is_t$ for any probability measure $\mu$ over $X$ as follows:
\begin{equation*}
  (\is_{t+1} \mu)(f) 
  = 
  \frac{\mu(L_{t+1}(y_{t+1} | \cdot) \cdot f)}{\mu(L_{t+1}(y_{t+1} | \cdot))}.
\end{equation*}
Since $Z(y_{t+1}) = \mu_t(L_{t+1}(y_{t+1} | \cdot))$ for every $t \ge 0$ these operators can be used to describe reweighing in an importance sampling estimate with target measure $\mu_{t+1}$ and importance measure $\mu_t$. 
In particular, if for some $t \ge 0$ a particle approximation $\mu_t^M$ of $\mu_t$ is given by the particles $\{X^{(t)}_i, W^{(t)}_i\}$, the operator $\is_{t+1}$ can be used to define the following approximations:
\begin{align*}
  Z^M(y_{t+1}) &:= \mu_t^M(L_{t+1}(y_{t+1} | \cdot)) = \sum_{i=1}^MW^{(t)}_iL_{t+1}(y_{t+1} | X_i),\\
  \mu_{t+1}^M(f) &:= (\is_{t+1} \mu_t^M)(f) \\ &= \frac1{Z^M(y_{t+1})}\sum_{i=1}^MW^{(t)}_iL_{t+1}(y_{t+1} | X^{(t)}_i)f({X^{(t)}_i}) \\ &= \sum_{i=1}^MW^{(t+1)}_if({X^{(t)}_i}),
\end{align*}
where the particle weights $W^{(t+1)}_i$ are given by
\begin{equation*}
  W^{(t+1)}_i = \frac{W^{(t)}_iL_{t+1}(y_{t+1} | X^{(t)}_i)}{\sum_{j=1}^MW_j^{(t)}L_{t+1}(y_{t+1} | X_j^{(t)})}.
\end{equation*}
Note that the $\is_{t+1}$ update requires $\sum_{j=1}^MW_j^{(t)}L_{t+1}(y_{t+1} | X_j^{(t)}) > 0$. 
If the likelihood $L_{t+1}(y_{t+1} | \cdot )$ is strictly positive, then this condition is always satisfied.
We observe that the IS update formula only changes the weights $$(W_i^{(t)})_{i=1}^M \mapsto (W_i^{(t+1)})_{i=1}^M.$$ 
The positions $(X_i^{(t)})_{i=1}^M = (X_i^{(t+1)})_{i=1}^M$ of the particles remain unchanged.
Based on the recursive update formula, we construct an approximation to the sequence $(\mu_t)_{t=0}^\infty$ as follows:
\begin{align*}
\mu_0^M &= S^M\mu_0, \\
\mu_{t+1}^M &= \is_{t+1}(\mu_t^M).
\end{align*}

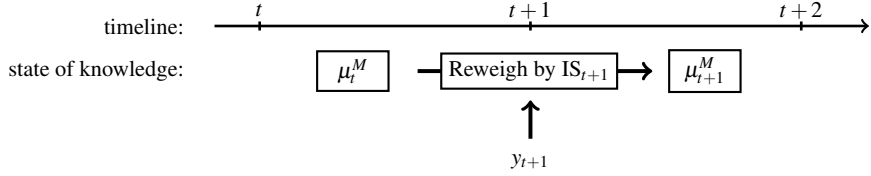
\begin{figure}[t]
\centering
\begin{tikzpicture}[scale=0.6]
  \draw	(1.5,0)node[anchor=east] {timeline:};
  \draw	(1.5,-1)node[anchor=east] {state of knowledge:};
  
  \draw	(3,0.35)node[align=center] {$t$};
  \draw	(9,0.35)node[align=center] {$t+1$};
  \draw	(15,0.35)node[align=center] {$t+2$};

  \draw[line width = 1.0, color = black,-](3,0.1) -- (3,-0.1);
  \draw[line width = 1.0, color = black,-](9,0.1) -- (9,-0.1);
  \draw[line width = 1.0, color = black,-](15,0.1) -- (15,-0.1);
  
  \draw[line width = 1.5, color = black,->](6.5,-1) -- (11.7,-1);
  \draw[line width = 1.0, color = black,->](2,0) -- (16.5,0);
  
  \draw	(4.25,-1)node[rectangle,anchor=west,text width=0.75cm, thick,draw=black,align=center,fill=white] {$\mu_t^M$};
  
  \draw[line width = 1.5, color = black,<-](9,-1.65) -- (9,-2.5);
  \draw	(9,-3)node[align=center] {$y_{t+1}$};

  \draw	(7,-1)node[align=center,rectangle,anchor=west,thick,draw=black,fill=white] {Reweigh by $\is_{t+1}$};

  \draw	(12.05,-1)node[align=center,rectangle,anchor=west,text width=0.75cm,thick,draw=black,fill=white] {$\mu_{t+1}^M$};
 \end{tikzpicture}
 \caption{Plot of the SIS approximation of the learning procedure. 
 The starting point is the approximation $\mu_t^M$ of the current state of knowledge at timepoint $t$. 
 At the timepoint $t+1$ we observe $y_{t+1}$ and use it to construct the importance sampling operator $\is_{t+1}$. 
 The new state of knowledge is approximated by $\mu_{t+1}^M$. }
\label{Fig_tikz_learning_sis}
\end{figure}
The sequential importance sampling algorithm is a natural and asymptotically correct approximation of the Bayesian filtering process, see e.g. \cite{Beskos2015,Bulte2018} for up to a finite number of observations. %, see Example \ref{Example_Gaussian_degeneracy}.
Now we discuss the accuracy of (sequential) importance sampling measured in terms of the so-called \emph{effective sample size (ESS)}.
Let $t \geq 0$. 
First, we define the constant $\rho_t > 0$ by
\begin{equation*}
  \rho_t := \frac{\mu_0(\mathbf{L}_t^2)}{\mu_0(\mathbf{L}_t)^2},
\end{equation*}
where $\mathbf{L}_t(\theta) = \prod_{i=1}^t L(y_t|\theta)$ is the joint likelihood of the dataset $y_{1:t}$.
One can show that the accuracy of (sequential) importance sampling up to time $t$ is equivalent to the accuracy of a standard Monte Carlo approximation with $M/\rho_t$ samples (see Remark \ref{Remark_ESS}).
The fraction $M/\rho_t$ is the \textit{effective sample size}. 
\begin{remark} \label{Remark_ESS}
 We sketch the derivation of the ESS in SIS for step $t \geq 0$. 
 We assume that we can sample $M'$ times independently from $\mu_t$. 
 Using these samples, we approximate $\mu_t$ by $S^{M'}(\mu_t)$, the standard Monte Carlo estimator.
 In this case, one can show that
 \begin{equation}\sup_{|f| \leq 1} \mathbb{E}\left(\Big|S^{M'}(\mu_t)(f) - \mu_t(f)\Big|^2\right) \leq \frac{4}{M'}. \label{EQ_MC_sample_size}
\end{equation}
This is an upper bound on the expected squared error between the integrals of any bounded test function $f$.
Hence, we can now use the \emph{sample size} $M'$ in equation \eqref{EQ_MC_sample_size} as an indicator for the accuracy of the standard Monte Carlo approximation $S^{M'}(\mu_t)$ of $\mu_t$. 
 In the SIS algorithm, $M$ samples are drawn from $\mu_0$ and are then reweighed by $\mathbf{L}_t$. 
 The resulting approximate measure $\mu_t^M$ fulfills the following error bound:
\begin{equation} \label{EQ_IS_sample_size}
\sup_{|f| \leq 1} \mathbb{E}\left(\Big|\mu_t^M(f) - \mu_t(f)\Big|^2\right) \leq \frac{4\rho_t}{M},
\end{equation}
see e.g. \cite{Agapiou2015}. 
Note that \eqref{EQ_MC_sample_size} can be derived from \eqref{EQ_IS_sample_size}, by setting $\mu_0 = \mu_t$ and $\mathbf{L}_t \equiv 1$, hence $\rho_t=1$.
% Following the idea of the sample size error indicator in Monte Carlo, the \emph{effective sample size} can be used analogously as an error indicator for the importance sampling approximation $\mu_t^M$ of $\mu_t$. 
To make the upper bounds in \eqref{EQ_MC_sample_size} and \eqref{EQ_IS_sample_size} comparable, we replace the sample size $M'$ in  \eqref{EQ_MC_sample_size} by the \textit{effective sample size} $M/\rho_t$ in \eqref{EQ_IS_sample_size}.
%It is given in equation  \eqref{EQ_IS_sample_size} by $M/\rho_t$, which essentially replaces the sample size $M'$ in  \eqref{EQ_MC_sample_size}.
\end{remark}
In conclusion, we can interpret the ESS  as follows: 
\begin{itemize}
\item If $M/\rho_t \approx M$, then the estimation is nearly as accurate as sampling directly from the correct measure. 
Hence a value $\rho_t \approx 1$ is desirable.
\item If $M/\rho_t \ll M$, then the estimation is only as accurate as a Monte Carlo approximation with a very small number of particles.
\end{itemize}
Unfortunately, during the course of a filtering procedure, it may happen that $\rho_t$ explodes. 
We illustrate this by a simple example.
\begin{example}[Degeneracy] \label{Example_Gaussian_degeneracy}
Consider the problem of estimating the mean $m^\dagger$ of a normal distribution $\mathrm{N}(m^\dagger, 1)$ using $t \in \mathbb{N}$ samples $y_1, \ldots, y_t$ of the distribution. 
By choosing the conjugate prior $\mathrm{N}(0, 1)$ it is easy to see that the posterior distribution of the unknown parameter $m$ is equal to $\mathrm{N}\left(\frac{S_t}{t+1}, (1 + t)^{-1}\right)$, where $S_t = \sum_{i=1}^ty_i$. 
It is possible to compute $\rho_t$ analytically,
  \begin{equation*}
    \rho_t = \frac{t+1}{\sqrt{2t+1}}\exp\left(\frac{S_t^2}{(2t+1)(t+1)}\right).
  \end{equation*}
  Hence, $\rho_t = O(t^{1/2}; t \rightarrow \infty)$ grows unboundedly as $t$ increases.
  This implies that the effective sample size $M/\rho_t$ converges to $0$ as $t\rightarrow \infty$.  
\end{example}
In a filtering problem where $M/\rho_t \rightarrow 0$ as $t\rightarrow \infty$ we cannot use SIS since the estimation accuracy deteriorates over time.
It is possible, however, to estimate the effective sample size, and to use this estimate to improve SIS.

\begin{remark}
The SIS algorithm presented in \S\ref{Subs_SIS} is basic in the sense that the importance measure for $\mu_{t+1}$ at time point $t+1$ is simply the measure $\mu_{t}$ at time point $t$.
It is possible to construct alternative importance measures by using Markov kernels in the SIS algorithm.
We refer to the generic framework in \cite[\S 2]{DelMoral2006} for more details.
\end{remark}

\subsection{Sequential Monte Carlo}\label{Subs_SMC}
The effective sample size is a good indicator of the impoverishment of the particle estimate for $\mu_t$ due to the discrepancy between the prior and target distributions. 
In practice, we approximate $M/\rho_t$ by 
\begin{equation} \label{Eq_Ess_Approx_Sec_Moment}
  \ess 
  :=  
  M\frac{\mu_0^M(\mathbf{L}_t)^2}{\mu_0^M(\mathbf{L}_t^2)}
  =
  \frac{\left(\sum_{i=1}^M \mathbf{L}_{t}(X_i)\right)^2}{\sum_{i=1}^M \mathbf{L}_{t}(X_i)^2}.
\end{equation}
Note that $\ess$ is a consistent estimator of $M/\rho_t$.
Throughout the rest of this tutorial we will also refer to $\ess$ as \emph{effective sample size}.

If $\ess$ is small it is possible to discard particles situated in regions of the parameter space with low probability.
We can then replace these particles with particles that are more representative of the target distribution in the sense that they carry a larger weight.
This is done by introducing a \emph{resampling step} after adjusting the weights of the particles in the IS estimate. 
Precisely, we consider an approximation $\mu_t^M$ given by the particles $\{X_i^{(t)}, W_i^{(t)}\}_{i=1}^M$ at time $t\geq 0$. 
The weights of the approximation are updated by the $\is_{t+1}$ operator, and $\ess$ is computed using the updated set of weights $(W_i^{(t+1)})_{i=1}^M$. 
If $\ess$ is larger than a pre-defined threshold $\mlow \in (0, M]$, then the positions of the particles remain unchanged in the step from $\mu_t^M$ to $\mu_{t+1}^M$, giving $X_i^{(t+1)} = X_i^{(t)}, i=1,\dots,M$. 
If, on the other hand, $\ess \le \mlow$, then a new set of particles is sampled according to the updated weights.
The particle estimate is then given by $\mu_{t+1}^M = S^M(\is_{t+1}\mu_t^M)$ where we use the new particle set in the Monte Carlo estimate.
\begin{remark}
%There are multiple ways to fix the threshold parameter $\mlow$. 
The choice of the threshold parameter $\mlow$ is highly problem dependent.
Doucet and Johansen  \cite{Doucet2011} mention that $\mlow = M/2$ is a typical choice. 
On the other hand, Beskos et. al. \cite{Beskos2015} use $\mlow = M$, that is, the resampling step is always carried out. 
Empirical tests targeting a small variance of the posterior mean estimator may also be helpful to define a suitable threshold.
\end{remark}
The resampling procedure can be performed as follows.
For $i=1,\dots,M$ sample $U_i \sim \mathrm{Unif}[0,1]$, a uniform random variable between $0$ and $1$.
Then, define
\begin{align*}
X_i^{(t+1)} &:= X_j^{(t)},\ \ j = \min\left\lbrace k \in \{1,\dots,M\} :\sum_{n=1}^k W_n^{(t+1)} \geq U_i \right\rbrace, \\
W_i^{(t+1)} &:= 1/M.
\end{align*}
Alternative ways of resampling are possible (see \cite{Gerber2017}).
Note that the resampling step can successfully eliminate particles in low density areas of the parameter space.
However, it still fails to reduce the particle degeneracy, since several particles may occur in the same position. 
Moreover, even  with resampling the particles cannot fully explore the parameter space, since their position remains fixed at all times. 

Due to the resampling in case of a small value $\ess$, or if $\ess$ is large, it is reasonable to assume that the remaining particles are approximately $\mu_{t+1}$-distributed.
The idea is now to scatter the particles in a way such that they explore the parameter space to reduce degeneracy without destroying the approximate $\mu_{t+1}$-distribution.
This can be achieved by a scattering with a $\mu_{t+1}$-invariant dynamic.
Such a dynamic is given by an ergodic Markov kernel
$K_{t+1} : X \times \mathcal{B}(X) \rightarrow [0, 1]$  that has $\mu_{t+1}$ as stationary distribution.
This means, if $X \sim \mu_{t+1}$ and $X' \sim K_{t+1}(X, \cdot)$, then $X' \sim \mu_{t+1}$.
Applying the Markov kernel repeatedly to a single particle will asymptotically produce particles that are $\mu_{t+1}$ distributed independently of the initial value (see \cite[Chapter~6]{Robert2004}).
Since we assumed that the particles are  approximately $\mu_{t+1}$-distributed, we typically rely only weakly on this asymptotic result.

\begin{figure}[t!]
  \centering
  \begin{tikzpicture}[scale=0.6]
  \draw	(1.5,0)node[anchor=east] {timeline:};
  
  \draw	(3,0.35)node[align=center] {$t$};
  \draw	(9,0.35)node[align=center] {$t+1$};
  \draw	(15,0.35)node[align=center] {$t+2$};

  \draw[line width = 1.0, color = black,-](3,0.1) -- (3,-0.1);
  \draw[line width = 1.0, color = black,-](9,0.1) -- (9,-0.1);
  \draw[line width = 1.0, color = black,-](15,0.1) -- (15,-0.1);
  \draw[line width = 1.0, color = black,->](2,0) -- (16.5,0);

  % line 2
  \draw[line width = 1.5, color = black,->](7,-0.85) -- (8.35,-0.85);
    \draw[line width = 1.5, color = black,->](9,-0.85) -- (10.35,-0.85);

  \draw	(6.45,-0.85)node[align=center,rectangle,anchor=west,thick,draw=black,fill=white,text width=0.6cm, text height=0.29cm] {$\is_{t+1}$};
  \draw	(8.45,-0.85)node[align=center,rectangle,anchor=west,thick,draw=black,fill=white,text width=0.6cm, text height=0.25cm] {$S^M$};
  \draw	(10.45,-0.85)node[align=center,rectangle,anchor=west,thick,draw=black,fill=white,text width=0.6cm, text height=0.29cm] {$K_{t+1}$};

  % line 3
  \draw	(1.5,-2.5)node[anchor=east] {state of knowledge:};
  \draw	(4.25,-2.5)node[rectangle,anchor=west,text width=0.75cm, thick,draw=black,align=center,fill=white] {$\mu_t^M$};
  
  \draw[line width = 1.5, color = black,->](6.5,-2.5) -- (11.7,-2.5);
  \draw	(7.6,-2.5)node[align=center,rectangle,anchor=west,thick,draw=black,fill=white] {SMC Update};

  \draw	(12.05,-2.5)node[align=center,rectangle,anchor=west,text width=0.75cm,thick,draw=black,fill=white] {$\mu_{t+1}^M$};

  \draw[line width = 1.5, color = black, dashed](7.65,-2.15) -- (6.5,-2.15) -- (6.5,-1.3);
  \draw[line width = 1.5, color = black, dashed](10.55,-2.15) -- (11.75,-2.15)-- (11.75,-1.3);
  
  % line 4  
  \draw[line width = 1.5, color = black,<-](9,-3.15) -- (9,-4.1);
  \draw	(9,-4.5)node[align=center] {$y_{t+1}$};

\end{tikzpicture}

%%% Local Variables:
%%% mode: latex
%%% TeX-master: "../main"
%%% End:
  \caption{Plot of the SMC approximation of the learning procedure. 
  The approximation is constructed in three steps: \emph{reweighing} to identify representative particles, \emph{resampling} to discard particles in low probability regions, and a \emph{correction} to adjust the position of the particles to the new state of knowledge. }
\label{Fig_tikz_learning_smc}
\end{figure}

The reader might find it not easy to construct an ergodic Markov kernel $K_t$ that has $\mu_t$ as a stationary measure, $t \geq 1$.
However, this is a standard task in so-called Markov Chain Monte Carlo methods (MCMC).
The literature on MCMC offers a large variety of suitable Markov kernels.
We mention Gibbs samplers, Hamiltonian Monte Carlo, Metropolis--Hastings, Random--Walk--Metropolis, Slice samplers (see e.g. \cite{Liu2004,Robert2004}).
In summary, we add a final step to the approximation, and apply once or several times the transition kernel $K_t$ to each of the particles to obtain a better coverage of the probability density of the posterior at time $t$. 
The resulting algorithm is the \emph{Sequential Monte Carlo} (SMC) method.
Its approximation of the learning process is depicted in Fig.~\ref{Fig_tikz_learning_smc}.

\section{Particle approximation of the Bayesian pendulum filtering} \label{Subs_Num_Pendulum}

We are now ready to compute an approximate solution to the Bayesian filtering problem in \S \ref{Subsec_Bayesian_Pendulum_formulation}. 
To this end we consider the likelihoods in \eqref{Eq_Pendulum_Likelihoods} and use the  set of measurements given in Table~\ref{Table_measurements}. 
Note that the distance between $\tau_6$ and $\tau_7$ is very short; we suspect that the timer has been operated twice instead of once.

% The following table contains the Garching times.
%
\begin{table}[h]
  \begin{center}
    \begin{tabular}{|l r|cccccccccc|}
      \hline measurement & $t$ & $1$ & $2$ & $3$ & $4$ & $5$ & $6$ & $7$ & $8$ & $9$ & ${10}$  \\
      \hline
  time (in s) & $\tau_t$ &  1.51 & 4.06 & 7.06 & 9.90 & 12.66 & 15.40 & 15.58 & 18.56 & 21.38 & 24.36 \\ 
      \hline
    \end{tabular}
  \end{center}
  \caption{Time measurements corresponding to the angle $x(\tau_t)=0$ of a simple pendulum.}
  \label{Table_measurements}
\end{table}
The pendulum length is $\ell = 7.4m$.
The initial angle $x_0 = 5^\circ = \pi/36$, and the initial speed $v_0 = 0$. 
Since $x_0$ is small and $v_0 = 0$, it would be possible to consider the linearised IVP with an analytical solution (see Remark \ref{Remark_linearisedODE}).
In tests not reported here the filtering solution obtained with the analytical solution of the linearised IVP did not differ from the numerical solution of the nonlinear IVP.

\lstinputlisting[float,caption={Python file {\fontfamily{txtt}\selectfont smc\_approx.py}.},captionpos=b,label=lst:python]{figures/smc_code.py}

The numerical results presented in the following sections have been computed with {\sc Python}.  
The associated Python code is available online, see \url{https://github.com/BayesianLearning/PenduSMC}.
A part of the code is printed in Display \ref{lst:python}.
It can be combined with the data set of time measurements in Table~\ref{Table_measurements} to reproduce our results, and to carry out further experiments.
The code in Display~\ref{lst:python} also summarises the complete modelling cycle of the pendulum filtering problem.
The class \texttt{ParticleApproximation} can be found in Appendix B.

\subsection{Sequential Importance Sampling} \label{Subs_num_SIS}
We first construct approximations of the learning process using the SIS algorithm as illustrated in Fig. \ref{Fig_tikz_learning_sis}. 
We employ different numbers of particles, $M = 2^4,2^5,\dots,2^{12}$, and perform 50 simulations for each value of $M$.
In Fig. \ref{Fig_Im_SIS} we present the results of the SIS approximations of the posterior distribution at time $t = 10$ depending on the number of particles $M$.
\begin{figure}[h]
  \begin{minipage}{.5\textwidth}
    \includegraphics[width=\linewidth]{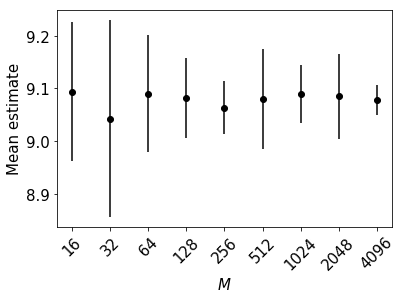}
  \end{minipage}
  \begin{minipage}{.5\textwidth}
    \includegraphics[width=\linewidth]{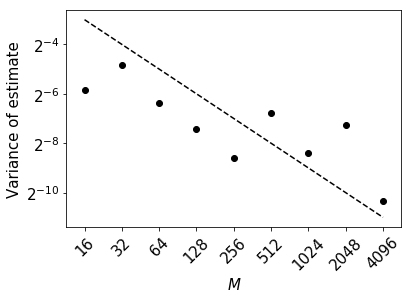}
  \end{minipage}
  \caption{Accuracy of the SIS approximation. 
  	In the plots we use data from runs with a variable number of particles indicated on the horizontal axis. We show results averaged over 50 runs per setting. Left: Plot displaying the convergence of the estimated mean of the posterior distribution. Each black dot represents the sample average over $50$ runs, and the bars represent the standard deviation of the mean within the runs. Right: Plot displaying the asymptotic convergence of the variance of the posterior mean estimate.}
  \label{Fig_Im_SIS}
\end{figure}
The left display in this figure shows that the variance of the posterior mean estimate is reduced as the number of particles $M$ is increased.
Moreover, we observe in the right display of the figure that the convergence rate of the variance of the posterior mean estimate is $O(1/M)$.
As expected this coincides with the convergence rate of a standard Monte Carlo approximation (see Remark \ref{Remark_ESS}).

Next, we fix the number of particles $M = 2500$, and investigate the accuracy in the SIS approximation by studying the effective sample size. 
Recall that in SIS only samples from the prior distribution are used throughout the filtering procedure. 
Anticipated by the plot of the densities in Fig. 3 we expect that these prior samples lead to a poor approximation of the sequence of posterior distributions constructed during the learning process.
%Anticipated by the plot of the densities in Fig. \ref{Fig_Im_Densities} we expect that the prior distribution is a poor importance distribution for the sequence of posterior distributions constructed during the learning process. 
This intuition is confirmed by numerical experiments shown in Fig. \ref{Fig_Im_SIS_Run}.
We see that the effective sample size is reduced dramatically in the first step of SIS with $t=1$.
In the final step with $t=10$ the effective sample size is only about $20\%$ of the initial sample size.
Unfortunately, a reduced effective sample size implies a loss of estimator accuracy.

\begin{figure}[h]
  \centering
  \begin{minipage}{.5\textwidth}
  \includegraphics[width=\linewidth]{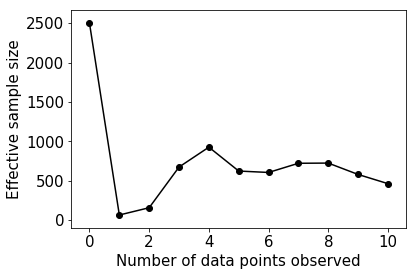}
  \end{minipage}
  \caption{Plot of the effective sample size for the SIS approximation of the learning process. }
  \label{Fig_Im_SIS_Run}
\end{figure}

Recall that the mean square error of a Monte Carlo estimator with $20\% \cdot M$ samples is 5 times larger compared to the mean square error with $M$ samples. 
In the simple pendulum setting, where the parameter space space is one-dimensional and compact, a sample size of $20\% \cdot M$ is likely still sufficient to obtain a useful posterior estimate. 
In real-world applications, however, we typically encounter high-dimensional and unbounded parameter spaces which require the exploration with a large number of representative samples.
In this case, the decrease of the effective sample size in SIS is a serious drawback. 
\begin{figure}[!t]
  \begin{minipage}{.5\textwidth}
    \includegraphics[width=\linewidth]{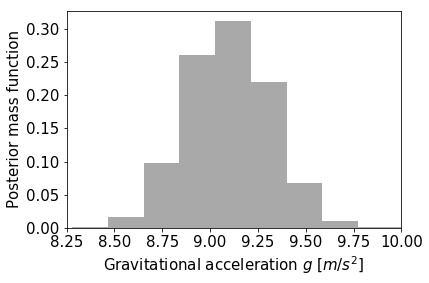}
  \end{minipage}
  \begin{minipage}{.5\textwidth}
    \includegraphics[width=\linewidth]{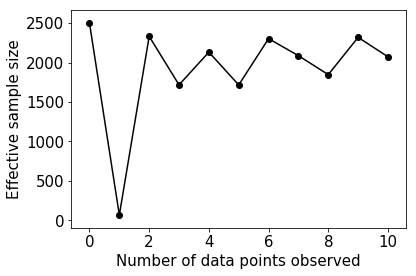}
  \end{minipage}
  \caption{SMC approximation of the final posterior distribution of the learning process at $t = 10$. 
  	Left: Histogram of the estimated measure $\mu_{10}$ computed by the particle approximation. 
  	Right: Plot of the effective sample size over time. 
  	Since $\mlow=1875$ the SMC algorithm performed $4$ resampling steps to maintain a large effective sample size.  }
  \label{Fig_Im_Typical}
\end{figure}

\subsection{Sequential Monte Carlo} \label{Subsubsec_SMC}
We proceed by constructing an approximation of the learning process using the SMC algorithm. 
The Markov kernel $K_t$ is given by a Random-Walk-Metropolis algorithm with target distribution $\mu_t$ and a Gaussian random walk with standard deviation $0.25$  as proposal distribution. 
Each Markov kernel $K_t$ is applied $5$ times to correct the approximated distribution.
We choose a minimal effective sample size corresponding to $75\%$ of the total population $M$.
This means that we resample if $\ess < \mlow := 75\% \cdot M$. 

We present a typical run of the algorithm in Fig. \ref{Fig_Im_Typical}.
Again $M = 2500$ particles are used to approximate the learning process.
In the right panel of Fig.~\ref{Fig_Im_Typical} we see that 4 resampling steps have been performed.
This maintains an ESS well over $1500$ and thus improves the accuracy of the posterior estimate with SMC compared to the estimate with SIS.

Moreover, in the left panel of Fig.~\ref{Fig_Im_Typical} we see that the SMC approximation to the posterior measure for $t=10$ is centered around the  value $g \approx 9.12 m/s^2$. 
To compare the estimate with the true value $g^\dagger = 9.808 m/s^2$ we compute the posterior probability of the event $\{|g - g^\dagger| < \varepsilon g^\dagger\}$.
This describes our posterior expectations about the closeness of the estimated parameter to the truth. 
We plot the results in Fig. \ref{Fig_eps}.
The posterior $\mu_{10}$ considers values close to the true parameter $g^\dagger$ unlikely. 
Aside from the first digit we would not be able to identify $g^\dagger$. 
However, the estimate is not bad given the very simple experimental setup.
The relative error associated with $g=9.12 m/s^2$ is 7\%; a similar result was obtained in \cite{Allmaras2013}. 

We suspect that the estimate for $g$ can be improved by using a more sophisticated physical model for the pendulum dynamics.
Moreover, the noise model could be improved. 
In tests not reported here we observed that the measurement error increases as time increases. 
This is plausible since the timer has been operated manually and not automatically by a sensor. 
Finally, we did not include the uncertainty in the initial condition $x_0$.

\begin{figure}[!t]
\centering
  \begin{minipage}{.5\textwidth}
    \includegraphics[width=\linewidth]{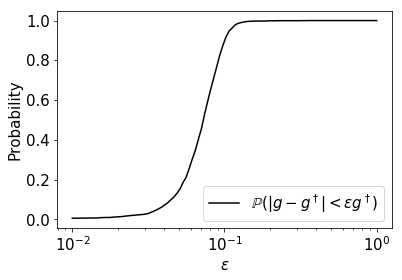}
\end{minipage}
  \caption{Comparison of the posterior measure $\mu^y$ with the true parameter $g^\dagger$. 
  We plot $\mu_{10}(|\cdot - g^\dagger| < \varepsilon g^\dagger)$ for various values of $\varepsilon$. 
  This value shows how likely the posterior measures sees the uncertain parameter $g$ to be $\varepsilon$-close to the true value $g^\dagger$. 
  The posterior probabilities are computed with SMC based on $2500$ particles.}
  \label{Fig_eps}
\end{figure}

To validate the SMC posterior estimate of $\mu_{10}$ with $M = 2500$ particles, we have performed a reference run using a Markov chain Monte Carlo sampler.
The MCMC sampler was run with 3000 posterior samples where the first 500 samples have been discarded to mitigate the burn-in effect. 
Using the MCMC and SMC samples we perform a kernel density estimation for the posterior pdf.
We present the estimation results in Fig. \ref{Fig_SMCMCMC}.
The posterior approximations do not differ significantly which leads us to conclude that the SMC approximation is consistent with the MCMC approximation.

\begin{figure}[!t]
\centering
  \begin{minipage}{.5\textwidth}
    \includegraphics[width=\linewidth]{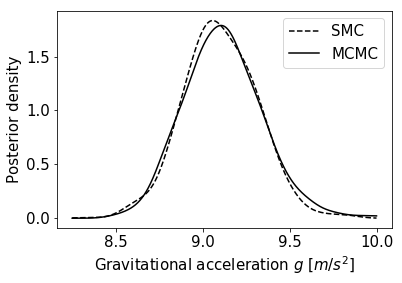}
  \end{minipage}
  \caption{Kernel density estimates of $\mu_{10}$ using the MCMC reference solution and the SMC solution. Each estimate is based on 2500 samples.}  
  \label{Fig_SMCMCMC}
\end{figure}

Moreover, we again investigate the convergence of SMC based on simulations with different numbers of particles  $M = 2^4,2^5,\dots,2^{12}$.
In each of these settings we consider the results of $50$ simulations.
We present the test results in Fig. \ref{Fig_Im_SMC}. 
We plot the posterior mean approximation at time $t = 10$ and the variance of the estimators within 50 simulations.
\begin{figure}[h]
  \begin{minipage}{.5\textwidth}
    \includegraphics[width=\linewidth]{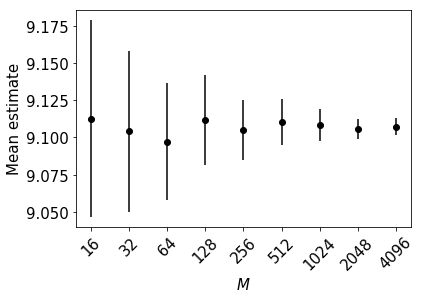}
  \end{minipage}
  \begin{minipage}{.5\textwidth}
    \includegraphics[width=\linewidth]{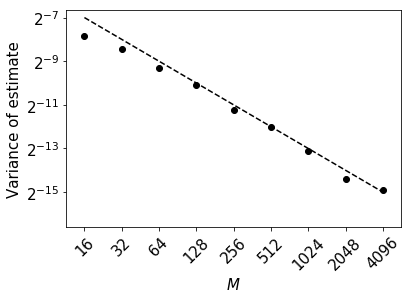}
  \end{minipage}
  \caption{Accuracy of the SMC approximation. 
  We use data from SMC runs with a variable number of particles indicated on the horizontal axis of each plot. 
  We show results averaged over 50 runs per setting.  
  Left: The convergence of the estimated mean of the posterior distribution. 
  The black dots represent the average over $50$ runs and the bars represent the standard deviation of the mean within the runs. 
  Right: The convergence of the variance of the posterior mean estimate.}
  \label{Fig_Im_SMC}
\end{figure}

Comparing Fig. \ref{Fig_Im_SIS} with Fig. \ref{Fig_Im_SMC} we see that the posterior mean estimates obtained with SMC differ only slightly from the estimates obtained with SIS.
Moreover, we observe again that the variance of the SMC estimate decreases with the rate ${O}(1/M)$ as $M$ increases (cf. Remark \ref{Remark_ESS}). 
However, we see a much higher variability in the posterior mean estimates in SIS.
In particular, we need a smaller number of particles in SMC compared to SIS to reach a certain variance of the estimator.
Recall, however, that the SMC algorithm involves parameters which need to be selected by the user; we mention the threshold sample size $\mlow$, and the Markov kernels $K_t$ together with the number of MCMC steps.
In contrast, the SIS algorithm does not require parameter tuning, and is simpler to implement.
Hence SIS could be used if the forward response operators $(\mathcal{G}_t)_{t \geq 1}$ can be evaluated cheaply; this allows more evaluations within a given computational budget to reach a desired accuracy.
SMC should be preferred if the evaluation of the forward response operators is computationally expensive, and if a small number of MCMC steps is sufficient to mix the particles efficiently in the parameter space. 

\subsection{Continuing to learn: data sets from further experiments}
\begin{figure}[h]
\centering
  \begin{minipage}{.5\textwidth}
    \includegraphics[width=\linewidth]{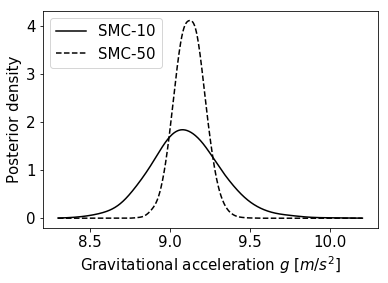}
  \end{minipage}
  \caption{Posterior density based on the 10 measurements given in Table \ref{Table_measurements} (SMC-10) and based on 40 further measurements (SMC-50). }
  \label{Fig_10vs50}
\end{figure}
The filtering problem we discussed so far is kept simple for demonstration purposes.
However, in reality one would perhaps not update the posterior after the collection of every single measurement (so every couple of seconds), and one would try to use more data sets, possibly from different sources.
Indeed, the data set in Table \ref{Table_measurements} is only one of several data sets that were collected in independent experiments, each carried out by different individuals.
A more realistic filtering problem is to update the posterior measure for $g$ after each round of measurements has been completed by an individual.
Then, the posterior measure reflects the knowledge obtained from  a fixed number of independently performed experiments.

Notably, the SIS and SMC algorithm can be used in this situation as well.
It is possible to process the measurements individually or in batches, with both SIS and SMC, without changes to the implementation.
Moreover, there is in principle no limit for the number of measurements that can be used in an update.
However, since we expect that the posterior measures associated with a larger number of measurements will be more concentrated, it is likely that the degeneracy of the effective sample size of SIS will become more pronounced.
For this reason we use SMC to include data sets from further pendulum experiments.

In Fig. \ref{Fig_10vs50} we plot the posterior density associated with a single experiment with 10 measurements (see the set-up in \S \ref{Subsubsec_SMC}) along with the posterior density for six experiments with a total number of 50 measurements; this includes the 10 measurements in the single experiment setting.
We observe that the posterior measure after six experiments is more concentrated (informed) than the posterior after one experiment.
Thus the uncertainty in the parameter $g$ is reduced after seeing data from different experiments.
This is consistent with our intuition about learning.

\section{Discussion}\label{sec:discussion}

We finish with some comments on current research directions.

For Bayesian filtering problems with finite-dimensional data and parameter spaces together with a simple mathematical model it is straightforward to prove the well-posedness of the Bayesian filtering problem (i.e. the existence, uniqueness, and stability of the posterior measure).
In general, we will have to consider infinite-dimensional parameter spaces (e.g. random fields), infinite-dimensional data spaces (e.g. the image of a tumor), and complex mathematical models.
The theoretical framework for the well-posedness proof has been established in \cite{Stuart2010}, however, the conditions therein need to be verified on a case-by-case basis.

A related problem is the inheritance of certain properties of measures (e.g. Gaussianity, convexity, tail behavior) within the sequence of posterior measures.
This is important since in a (time-dependent) filtering problem the posterior at time $t$ becomes the prior at time $t+1$.
Hence it is not sufficient to establish the existence of a posterior without studying its properties as well.

It is often possible to prove that the Bayesian learning process converges to a measure concentrated in a small neighbourhood around the true parameter in the large data limit (Bernstein-von-Mises theorem, Doob's consistency theorem, see \cite[\S 10]{vdVaart1998}).
However, the particle filters that approximate the learning process often suffer from a path degeneracy, meaning that a large number of updates decreases the estimator quality. 
We illustrated this for SIS where the effective sample size decreased over time. 
For SMC, however, a similar effect might occur, see \cite{Andrieu1999} or \cite[\S 5.1.2.]{Latz2018}, where path degeneracy is observed.

Finally, we mention that in current works on the SMC convergence the effect of the MCMC steps is often not considered (see e.g. \cite[Thm 3.1]{Beskos2015}) and requires further investigations.
If the MCMC steps are analysed, the results require rather strong assumptions, see \cite{Whiteley2013} and \cite[\S 5]{Beskos2015}.
In addition, the mechanism of importance sampling, which is a crucial component in SMC, is not fully analysed to date. 
Recent works \cite{SanzAlonso2018} establish bounds on the necessary sample size, however, it is unclear which metric is appropriate to study the nearness and, eventually, convergence of measures.

\section*{Appendix A: Quantities of interest and weak representations of measures} \label{Subs_Integrals}
For a sequence of measures $(\mu_t)_{t=1}^\infty$ we are typically interested in computing the expected value of a measurable and $\mu_t$-integrable ($t \geq 1$) \emph{quantity of interest} $f : X \rightarrow \mathbb{R}$ with respect to a measure in the sequence, that is $$\mu_t(f) := \int f \mathrm{d}\mu_t, \ \ t \geq 1.$$
%Some examples for quantities of interest are given next.
\begin{example}[Quantities of interest] \label{EXAm_QoI}
Let $t \geq 1$.
\begin{itemize}
\item Let $f(x) := x_i$ denote the canonical projection onto the coordinate $x_i$ of $x$. 
Then, $\mu_t(f)$ is the \emph{mean} of the $i$th marginal density of $\mu_t$. 
Moreover, $\mu_t(f^k)$ is the $k$th moment ($k \in \mathbb{N}$) of the same marginal.
\item Let $A \subseteq X$ be measurable, and let $f$ be given by
$$
f(x) = \begin{cases} 1, &\text{if } x \in A,\\
0, &\text{otherwise.}
\end{cases}
$$
Then, $\mu_t(f) =: \mu_t(A)$ is the \emph{probability} that the parameter takes on values in $A$.
\end{itemize}
\end{example}

The quantities of interest in Example \ref{EXAm_QoI} are interesting in their own right.
More importantly, it is possible to use integrals with respect to certain functions to fully represent a measure. 
\begin{proposition}\label{criteria}
Let $\mu, \nu$ be two measures on $(X, \mathcal{B}X) := (\mathbb{R}^N, \mathcal{B}\mathbb{R}^N)$. Then, the identiy $\mu = \nu$ holds, if one of the following conditions is satisfied.
\begin{enumerate}
\item $\mu(A) = \nu(A)$ \ for all $A \in \mathcal{B}X$,
\item $\mu(f) = \nu(f)$ \ for all bounded, measurable functions $f: X\rightarrow \mathbb{R}$,
\item $\mu(f) = \nu(f)$ \ for all bounded, continuous functions $f: X\rightarrow \mathbb{R}$,
\item $\mu(f) = \nu(f)$ \ for all bounded, Lipschitz-continuous functions $f: X\rightarrow \mathbb{R}$.
\end{enumerate}
\end{proposition}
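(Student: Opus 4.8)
The plan is to reduce the statement, with its four sufficient conditions, to the single substantive implication that condition~(4) alone forces $\mu=\nu$, and then to dispatch the remaining cases by elementary inclusions. First I would record the trivial reductions: condition~(1) \emph{is} the equality $\mu=\nu$; condition~(2) yields~(1) immediately upon choosing the bounded measurable function $f=\mathbf{1}_A$ for $A\in\mathcal{B}X$; and, since every bounded Lipschitz function is bounded and continuous and every bounded continuous function is bounded and measurable, each of hypothesis~(2) and hypothesis~(3) implies hypothesis~(4). Hence everything comes down to proving $(4)\Rightarrow(1)$. I will tacitly assume, as holds for the probability measures of this tutorial, that $\mu$ and $\nu$ are finite; indeed, taking $f\equiv 1$ in~(4) already shows $\mu(\mathbb{R}^N)=\nu(\mathbb{R}^N)$.

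The core of the argument is to approximate indicators of closed sets by bounded Lipschitz functions. Given a closed set $C\subseteq\mathbb{R}^N$, for $k\in\mathbb{N}$ set
\[
 h_k(x):=\max\bigl\{0,\;1-k\,\mathrm{dist}(x,C)\bigr\},\qquad \mathrm{dist}(x,C):=\inf_{y\in C}\|x-y\|.
\]
Then $0\le h_k\le 1$, each $h_k$ is Lipschitz (with constant $k$) and satisfies $h_k\equiv 1$ on $C$, and, because $C$ is closed so that $\mathrm{dist}(x,C)>0$ whenever $x\notin C$, we have $h_k(x)\downarrow\mathbf{1}_C(x)$ pointwise as $k\to\infty$. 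Since $0\le h_k\le 1$ and the constant $1$ is integrable against the finite measures $\mu$ and $\nu$, dominated convergence gives $\mu(h_k)\to\mu(C)$ and $\nu(h_k)\to\nu(C)$; as $\mu(h_k)=\nu(h_k)$ for every $k$ by~(4), it follows that $\mu(C)=\nu(C)$ for every closed set $C$.

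It remains to pass from agreement on closed sets to agreement on all of $\mathcal{B}X$. The family of closed subsets of $\mathbb{R}^N$ is stable under finite intersections, hence a $\pi$-system; it contains $X=\mathbb{R}^N$ with $\mu(X)=\nu(X)<\infty$; and it generates $\mathcal{B}\mathbb{R}^N$. The uniqueness theorem for measures (Dynkin's $\pi$-$\lambda$ theorem) then yields $\mu(A)=\nu(A)$ for all $A\in\mathcal{B}X$, that is, condition~(1), which completes the proof.

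I do not expect a genuinely hard step; the only place requiring care is the measure-theoretic bookkeeping that makes dominated convergence and the $\pi$-$\lambda$ uniqueness theorem applicable, namely the finiteness of $\mu$ and $\nu$. One may equally work with open sets, approximating $\mathbf{1}_U$ from below by $\min\bigl\{1,\,k\,\mathrm{dist}(x,U^c)\bigr\}$ and invoking monotone convergence, which is slightly more robust. It is also worth remarking that the equivalence of~(1),~(3) and~(4) is the exact-equality analogue of the Portmanteau characterisation of weak convergence of probability measures.
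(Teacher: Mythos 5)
Your proof is correct, but for the substantive implication it takes a different route from the paper. The paper disposes of (1) and (2) exactly as you do, and then handles (3) and (4) in one line by observing that $x\mapsto e^{i\langle \xi,x\rangle}$ has bounded, Lipschitz-continuous real and imaginary parts, so either hypothesis forces the characteristic functions of $\mu$ and $\nu$ to coincide; uniqueness of the Fourier transform of a (finite) measure \cite[Thm.~13.16, Thm.~15.8]{Klenke2014} then gives $\mu=\nu$. You instead reduce everything to $(4)\Rightarrow(1)$ and prove that implication by approximating $\mathbf{1}_C$ for closed $C$ with the Lipschitz functions $h_k(x)=\max\{0,1-k\,\mathrm{dist}(x,C)\}$, passing to the limit by dominated convergence, and concluding via the $\pi$--$\lambda$ uniqueness theorem on the $\pi$-system of closed sets. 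Your argument is more self-contained --- it needs no Fourier inversion and is essentially the standard proof of the Portmanteau theorem, which makes the connection to Proposition~\ref{Prop_Convergence} transparent --- at the cost of being longer; the paper's argument buys brevity by outsourcing the work to two textbook theorems. Your explicit remark that finiteness of the measures must be assumed (it is needed both for dominated convergence and for the $\pi$--$\lambda$ argument, and the statement is false for general infinite measures) is a point the paper glosses over, though note that taking $f\equiv 1$ only shows $\mu(\mathbb{R}^N)=\nu(\mathbb{R}^N)$, not that this common value is finite, so finiteness genuinely is an added hypothesis rather than a consequence of (4).
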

\begin{proof}
Condition (1.) is the definition of $\mu = \nu$. 
Condition (2.) implies (1.) by setting $$f(x) := \begin{cases}1, &\text{ if }x \in A\\ 0, &\text{ otherwise,}\end{cases}$$
for any $A \in \mathcal{B}X$.
Conditions (3.) and (4.) imply equivalence of the characteristic functions of $\mu, \nu$ which implies that $\mu = \nu$ \cite[Thm. 13.16, Thm. 15.8]{Klenke2014}.
\end{proof}
Moreover, we can use the criteria in Proposition~\ref{criteria} to investigate the convergence of a sequence of measures.

\begin{proposition} \label{Prop_Convergence}
Let $(\mu^M)_{M=1}^\infty$ be a sequence of measures and let $\nu$ denote a further measure on $(X, \mathcal{B}X) := (\mathbb{R}^N, \mathcal{B}\mathbb{R}^N)$. 
Then $\mu^M \rightarrow \nu$ as $M \rightarrow \infty$

\begin{enumerate}
\item in total variation, if one of the following conditions holds: 
\begin{enumerate}
\item $\mu^M(A) \rightarrow \nu(A)$ as $M \rightarrow \infty$ for all $A \subseteq X$ measurable,
\item $\mu^M(f) \rightarrow \nu(f)$ as $M \rightarrow \infty$ for all bounded, measurable functions $f: X\rightarrow \mathbb{R}$,
\end{enumerate}
\item  weakly, if it converges in total variation, or if one of the following conditions holds: 
\begin{enumerate}
\item $\mu^M(f) \rightarrow \nu(f)$ as $M \rightarrow \infty$ for all bounded, continuous functions $f: X\rightarrow \mathbb{R}$,
\item $\mu^M(f) \rightarrow \nu(f)$ as $M \rightarrow \infty$ for all bounded, Lipschitz-continuous functions $f: X\rightarrow \mathbb{R}$.
\end{enumerate}
\end{enumerate}
\end{proposition}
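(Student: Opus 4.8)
The plan is to trace every one of the listed conditions back to the \emph{definitions} of total variation and weak convergence, reusing the two ingredients that already appeared in the proof of Proposition~\ref{criteria}: indicator functions turn a statement about sets into a statement about bounded measurable functions, and the bounded Lipschitz functions form a convergence-determining class via the characteristic function. So the proof is a short bookkeeping argument plus one genuine approximation step.

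For part~(1), recall that $\mu^M \to \nu$ in total variation means $\sup_{A \in \mathcal{B}X}|\mu^M(A) - \nu(A)| \to 0$, equivalently $\sup_{|f|\le 1}|\mu^M(f) - \nu(f)| \to 0$ over bounded measurable $f$. Read with this uniformity (condition~(1a) being understood as convergence uniform in $A$, i.e.\ as the definition of the total-variation distance), (1a) \emph{is} the definition, so nothing is to be shown; and (1b) implies (1a) by inserting $f := \mathbf{1}_A$ for an arbitrary $A \in \mathcal{B}X$, exactly as in Proposition~\ref{criteria}.

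For part~(2), I would first observe that total variation convergence implies weak convergence, because for any bounded measurable $f$ — in particular any bounded continuous $f$ — one has $|\mu^M(f) - \nu(f)| \le 2\,\|f\|_\infty \sup_{A}|\mu^M(A) - \nu(A)| \to 0$. Next, condition~(2a) is literally the definition of weak convergence on $(\mathbb{R}^N, \mathcal{B}\mathbb{R}^N)$, so it only remains to show (2b)$\Rightarrow$(2a): upgrade convergence tested against bounded Lipschitz functions to convergence tested against all bounded continuous functions. Given bounded continuous $g$ with $\|g\|_\infty \le C$, I would approximate $g$ from below by the Lipschitz inf-convolutions $g_\lambda(x) := \inf_{y}\{g(y) + \lambda|x-y|\}$, which are $\lambda$-Lipschitz, bounded by $C$, and increase to $g$ pointwise as $\lambda \to \infty$ (uniformly on compact sets, by (uniform) continuity of $g$ there). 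A $3\varepsilon$-argument, $|\mu^M(g) - \nu(g)| \le |\mu^M(g - g_\lambda)| + |\mu^M(g_\lambda) - \nu(g_\lambda)| + |\nu(g_\lambda - g)|$, makes the middle term vanish for each fixed $\lambda$ by hypothesis~(2b), while the two outer terms are controlled by the uniform-on-compacts approximation together with the tightness of $\nu$ and a uniform-in-$M$ tightness of $(\mu^M)$.

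The main obstacle is precisely that last point: since $g_\lambda \to g$ is only uniform on compact sets, one must control the mass $\mu^M$ places far from the origin uniformly in $M$. For probability measures this can be forced by testing against a Lipschitz cutoff, but it has to be spelled out; the cleaner route, and the one I would take to match the style of Proposition~\ref{criteria}, is to invoke directly that the bounded Lipschitz functions are convergence-determining on a Polish space — the Portmanteau-type statement underlying \cite[Thm.~13.16, Thm.~15.8]{Klenke2014} — so that (2b) immediately yields (2a), hence weak convergence.
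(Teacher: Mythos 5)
Your proof is correct and lands in the same place as the paper's: the paper likewise treats part (1) as immediate from the definition of total variation convergence, takes (2a) as the definition of weak convergence, and disposes of (2b) by citing the Portmanteau theorem \cite[Thm. 13.16]{Klenke2014}. The only place you genuinely diverge is the sketched inf-convolution argument for (2b)$\Rightarrow$(2a), and you correctly diagnose why it cannot stand as written: $g_\lambda \uparrow g$ only locally uniformly, so the outer terms of your $3\varepsilon$ bound require tightness of $(\mu^M)_M$ uniformly in $M$, which would itself have to be extracted from hypothesis (2b) (e.g.\ by testing against Lipschitz cutoffs of the form $\min\{1, \mathrm{dist}(x, B_R^c)\}$ and using the tightness of $\nu$); retreating to the fact that the bounded Lipschitz functions are convergence-determining is exactly the paper's move, so the elementary route buys nothing here beyond exhibiting where the work is hidden. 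One further remark in your favour: your insistence on reading (1a) as convergence \emph{uniform} in $A \in \mathcal{B}X$ is not pedantry. Pointwise setwise convergence does not imply convergence in total variation --- take $\mu^M$ with density $1 + \sin(2\pi M x)$ on $[0,1]$, which converges setwise to Lebesgue measure by the Riemann--Lebesgue lemma while the total variation distance stays at $1/\pi$ --- and the same caveat applies to (1b), a point the paper's one-line proof of part (1) glosses over.
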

\begin{proof}
Convergence in total variation holds if $$\lim_{M\rightarrow \infty}\sup_{A \in \mathcal{B}X}|\mu^M(A) - \nu(A)| = 0.$$
This follows directly by condition (1.a) or (1.b).
(2.a) is the definition of weak convergence, (2.b) is implied by the Portmanteau Theorem \cite[Thm. 13.16]{Klenke2014}.
\end{proof}
Thus, instead of investigating the properties of measures directly, it is possible to study integrals of functions with respect to these measures.

\section*{Appendix B: Source code of the ParticleApproximation class}

\lstinputlisting[caption={Python file {\fontfamily{txtt}\selectfont particle\_approximation.py}.},captionpos=b]{figures/particle_approximation.py}

\begin{acknowledgement}
The authors thank the anonymous reviewer for pointing out an error in an earlier version of the implementation of SIS, and for several helpful comments.
JL and EU gratefully acknowledge the support by the German Research Foundation (DFG) and the Technische Universit\"at M\"unchen through the International Graduate School of Science and Engineering within project 10.02 BAYES.
The measurements were collected by the authors and with the much appreciated help by Elizabeth Bismut, Ionu\c{t}-Gabriel Farca\c{s}, Mario Teixeira Parente, and Laura Scarabosio during the Open House Day on 21 October 2017 at the Forschungszentrum Garching, Germany.
\end{acknowledgement}

\bibliographystyle{spmpsci}
\bibliography{library}

\end{document}